\newcommand{\x}{\mathbf{x}}
\newcommand{\y}{\mathbf{y}}
\newcommand{\cvec}{\mathbf{c}}
\newcommand{\bvec}{\mathbf{b}}
\newcommand{\ie}{{\it i.e.}, }   
\newcommand{\eg}{{\it e.g.}, }  
\newcommand{\cf}{{\it cf.}, }    
\newcommand{\etal}{\emph{et al. }}
\newcommand{\np}{\mathrm{NP}}
\newcommand{\R}{\mathbb{R}}  
\newcommand{\N}{\mathbb{N}}  
\begin{document}
\mainmatter              
\title{Strategic Seeding of Rival Opinions\thanks{This research was conducted while S. D. Johnson was a graduate student at the University of California, Davis.}}
\titlerunning{On the Strategic Seeding of Opinion Dynamics}  
%
\author{Samuel D. Johnson\inst{1} \and Jemin George\inst{2} \and Raissa M. D'Souza\inst{3}}
\authorrunning{S. D. Johnson, J. George, and R. M. D'Souza}   
%
\tocauthor{Samuel D. Johnson, Jemin George, Raissa M. D'Souza}
\institute{HRL Laboratories, LLC., Malibu, CA 90265, USA\\
\email{sdjohnson@hrl.com}
\and
United States Army Research Laboratory,
Adelphi, MD 20783, USA\\
\email{jemin.george.civ@mail.mil}
\and
University of California, Davis, CA 95616, USA\\
\email{raissa@cse.ucdavis.edu}}

\maketitle              

\begin{abstract}        
We present a network influence game that models players strategically seeding the opinions of nodes embedded in a social network. A social learning dynamic, whereby nodes repeatedly update their opinions to resemble those of their neighbors, spreads the seeded opinions through the network. After a fixed period of time, the dynamic halts and each player's utility is determined by the relative strength of the opinions held by each node in the network \emph{vis-\`{a}-vis} the other players. We show that the existence of a pure Nash equilibrium cannot be guaranteed in general. However, if the dynamics are allowed to progress for a sufficient amount of time so that a consensus among all of the nodes is obtained, then the existence of a pure Nash equilibrium can be guaranteed. The computational complexity of finding a pure strategy best response is shown to be $\np$-complete, but can be efficiently approximated to within a $(1 - 1/e)$ factor of optimal by a simple greedy algorithm.

\keywords {social networks, opinion dynamics, game theory, Nash equilibrium, computational complexity, approximation algorithm}
\end{abstract}

\section{Introduction}
\label{sec:intro}
Opinions are shaped by the information individuals obtain through their social connections. These opinions inform our career decisions, political views, and purchasing behaviors, which can ultimately spread to affect an entire society. The \emph{influence maximization problem} \cite{Kempe2003b} asks: Given the ability to seed a small number of individuals (nodes) in a social network to adopt a desired behavior (\eg to purchase a product, support a political candidate, contract an infection, etc.), which nodes should be selected so that the behavior subsequently spreads to a maximum fraction of the entire population? 

Strategic extensions to the influence maximization problem that involve two or more players representing competing, substitutable behaviors (products, opinions, infections, etc.) where each player selects a set of seed nodes so as to maximize the fraction of the population that eventually adopts their represented behavior, have subsequently been studied; see, for example,  \cite{Bharathi2007,Alon2010b,Borodin2010,Tzoumas2012,Fotakis2014,Goyal2014}. A common feature among most of these models is that the diffusion dynamics they use involve nodes making binary decisions to determine whether to fully adopt one of the diffused opinions. Furthermore, these models often employ progressive SI-style dynamics in which nodes that do not have an opinion must first make an irreversible decision to fully commit themselves to a single opinion in order to become a participant in the subsequent propagation of their adopted opinion.
Although such dynamics meet the desiderata of many applications with practical importance, we believe them to be unnatural for the diffusion of opinions through social networks because of the requirement that opinions are only allowed to pass through nodes that have already committed irreversibly to a particular preference.

In the current work, we present the \emph{network influence game} (NIG) to model the strategic seeding of opinions in a social network by $m \geq 2$ players and employs a dynamic that is arguably more appropriate than the SI-style dynamics used elsewhere. In our model, each node maintains a vector of opinions toward the $m$ players, and in each discrete time step, a node's opinion is updated to reflect a weighted averaging of their previous opinion and the opinions of their neighbors. These dynamics proceed in accordance with the \emph{consensus dynamic} \cite{DeGroot1974}, and terminate after a period of $T$ time steps. 
The consensus dynamic has played a prominent role in the study of opinion dynamics in social networks -- see, for example,  the surveys by Jackson \cite{Jackson2008,Jackson2011c}. 
In our competitive model, players' utilities are defined to be a function of each node's relative opinions toward the players upon the conclusion of the dynamic process. 

The NIG model is formally presented in Section~\ref{sec:model}, followed by a discussion on the nature of influence pertaining to it in Section~\ref{sec:model:influence}. 
Section~\ref{sec:nash} contains our results on the existence of pure Nash equilibrium strategies. We show that existence can be guaranteed if the dynamics run until a consensus opinion is reached, but cannot be guaranteed otherwise. 
In Section~\ref{sec:br} we show that computing a pure strategy best response is $\np$-complete, yet can be efficiently approximated to within a factor of $(1 - 1/e)$ of optimal.
Section~\ref{sec:con} concludes the paper with a discussion and suggested topics for future research.

\section{Model}
\label{sec:model}
The \emph{network influence game} NIG is specified by a tuple $\langle M, G, \Delta_C, T, \bvec, \pi \rangle$ with a player set, $M = \{1, \dots, m\}$; a network, $G = (V,E)$, represented by a weighted digraph with $|V| = n$ nodes; the opinion dynamic, $\Delta_C$, that determines how influence spreads between adjacent nodes in the network; the length of time, $T \in \N$, that the diffusion dynamic is allowed to proceed; a profile, $\bvec = (b_1, \dots, b_m)$, of integer seed budgets $b_i > 0$ for each player $i \in M$; and a utility function, $\pi_i(\cdot)$, that aggregates the nodes' opinions upon the conclusion of the diffusion dynamics at time $T$ into a non-negative payoff for each player $i \in M$. 

We assume that the graph $G$ is strongly connected. Furthermore, we stipulate that for all edges $(u,v) \in E$, the edge weight $w(u,v) > 0$ and, for all nodes $v \in V$, the sum of all incoming edge weights equals one.

A (pure) strategy for player $i$ is a subset $s_i \subseteq V$ of at most $b_i$ seed nodes. A strategy profile $s = (s_1, \dots, s_m)$ specifies the seed nodes chosen by all $m$ players, and is used to set the initial conditions for the opinion dynamics, the result of which determines the players' utilities. 

The NIG's opinion dynamics, $\Delta_C$, is based on DeGroot's \emph{consensus dynamic} \cite{DeGroot1974}. For this dynamic, each node $v \in V$ maintains a length-$m$ opinion vector $\x^v = (x^v_1, \dots, x^v_m)$ with the entry $0 \leq x^v_i \leq 1$ representing $v$'s opinion toward player $i$. If $x^v_i > x^v_j$, then this is to be interpreted as $v$ holding a more favorable opinion toward player $i$ than toward player $j$. We require that the sum of a node's opinions is at most one; $|\x^v|_1 = \sum_{i \in M} x^v_i \leq 1$. Since we need to refer to the evolution of a node's opinion over time, we will use $\x^v(t)$ to denote $v$'s vector of opinions at time $t$, with entry $x^v_i(t)$ representing $v$'s opinion toward player $i$ at time $t$.

The diffusion process is initialized by $s$ at time $t=0$. This involves each player $i$ implanting a ``seed opinion'' $\y^i$ into each of the nodes included in $s_i$. The seed opinions $\y^i = (y^i_1, \dots, y^i_m)$ are defined as $y^i_i = 1$ and $y^i_j = 0$ for all $j \neq i$, meaning that $\y^i$ specifies a high (maximum) opinion toward player $i$ and a low (minimum) opinion toward all other players $j$. Let $M_v(s) = \{i \mid v \in s_i\} \subseteq M$ denote the subset players that include a given node $v$ in their strategy, and define $m_v(s) = | M_v(s) |$. Finally, let $V_s$ denote the subset of nodes that are designated as seed nodes by at least one player. We initialize the opinions of the nodes $v \in V$ towards each player $i \in M$
	\begin{equation}
	\label{eq:init}
	x^v_i(0) =	\begin{cases}
				\frac{1}{m_v(s)} \sum_{j \in M_v} y^j_i & \text{ if } v \in V_s \\
				\varepsilon & \text{ if } v \in V \setminus V_s,
			\end{cases}
	\end{equation}
where $0 < \varepsilon \ll 1/m$ is a small constant. Equation~\eqref{eq:init} specifies that each seed node $v \in V_s$ is initialized to the average seed opinion of the players that include $v$ in their strategies. Otherwise, for a node $v \notin V_s$, initialization involves assigning a $\varepsilon$ opinion value toward every player $i \in M$. 

The consensus dynamic, $\Delta_C$, proceeds in discrete time steps $t = 1, 2, \dots, T$ and specifies that the opinion of a node $v$ at time $t$ is a weighted average of its prior opinion and the opinions of its neighbors at time $t-1$. Specifically, 
	\begin{equation}
	\label{eq:dynamics}
	x^v_i(t) = (1 - \alpha) \cdot x^v_i(t-1) + \alpha \sum_{u \in N^+(v)} w(u,v) \cdot x^u_i(t-1), \qquad \forall i \in M,
	\end{equation}
where $N^+(v) = \{u \mid (u,v) \in E\}$ is the set of $v$'s incoming neighbors in $G$ and $0 < \alpha < 1$ is a model parameter. 

This dynamic can be expressed more succinctly in matrix form. Let $A$ be the weighted adjacency matrix of $G$ with entries $a_{ij} = w(i,j)$, and define $\Gamma = (1 - \alpha) I + \alpha A^\intercal$ to be the $n \times n$ influence matrix\footnote{$\Gamma$ is sometimes referred to as a \emph{listening structure} \cite{DeMarzo2003} or \emph{interaction matrix} \cite{Golub2010}.} with entries $\gamma_{vu}$ conveying the amount of direct influence that the opinions of node $u$ shape those of node $v$ from one time step to the next. By construction, $\Gamma$ is an aperiodic stochastic matrix.  Let $\x_i(t)$ denote the length-$n$ vector containing entries for each node's opinion toward player $i$ at time $t$. Using $\Gamma$ and $\x_i(\cdot)$, we can rewrite the dynamics in Equation~\eqref{eq:dynamics}  as $\x_i(t) = \Gamma \x_i(t-1)$. In particular, the opinions toward player $i$ after $T$ time steps is simply $\x_i(T) = \Gamma^T \x_i(0)$.\footnote{We use $\Gamma^T$ to denote the matrix $\Gamma$ raised to the $T$th power. For matrix transposition, we use the notation $\Gamma^\intercal$.} 
	
Upon the termination of the opinion dynamics after $T$ steps, each node $v$ is left holding an opinion vector $\x^v(T)$. The utility for player $i \in M$ is defined to be the average relative opinion held by the population toward $i$,
	\begin{equation}
	\label{eq:utility}
	\pi_i(s) = \frac{1}{n} \sum_{v \in V} \frac{x^v_i(T)}{ |\x^v(T)|_1 }.
	\end{equation}
Notice that, \eqref{eq:utility} implies that for any strategy profile $s$, we have $\sum_{i \in M} \pi_i(s) = 1$, so the NIG is a constant-sum game.

\subsection{Influence}
\label{sec:model:influence}
A player's best response strategy in the NIG involves selecting an ``influential'' subset of seed nodes so that, upon the termination of the dynamics, the average relative opinion held by the nodes toward the player is maximized. The precise character of \emph{influence} in this context deserves some examination.

It is well-known that the DeGroot consensus dynamic converges so that a common opinion is shared by every node in the network is guaranteed as $T \rightarrow \infty$ when the matrix $\Gamma$ is aperiodic and stochastic. The consensus obtained is described by a weighted sum of the nodes' initial opinions (at time $t=0$), with the weights given by the entries in the eigenvector of $\Gamma$ corresponding to the eigenvalue $1$. Hence, for sufficiently large $T$, a node's influence is directly related to their eigenvector centrality. However, if $T$ is not large enough to obtain consensus, then a node's eigenvector centrality no longer corresponds to the influence they exert on the average opinions upon the termination of the dynamics. 

The importance of $T$ in characterizing the influence that a node exerts on a diffusion process was recently identified in an empirical study by Banerjee \etal \cite{Banerjee2013} (see also \cite{Banerjee2014}), which led them to define a quantity called \emph{diffusion centrality}. 
Although their definition corresponds to a different diffusion dynamic than the one we consider in this paper, we can still offer a definition that is qualitatively similar to theirs but tailored to the consensus dynamic.

Let $\delta[j]$ denote the $n$-dimensional column vector consisting of a one in row $j$ and zeros everywhere else. The diffusion centrality of a node $v$ is defined to be the vector $\cvec^v = \Gamma^T \delta[v]$ whose entries $c^v_u$ describe the fraction of node $u$'s opinion at time $T$ that is due to node $v$'s initial opinion at time $t=0$. 
As $T \rightarrow \infty$, the convergence of the consensus dynamic ensures that $|c^v_{u} - c^v_{u'}| \rightarrow 0$ for all $u, u' \in V$; let $c^v$ denote this uniform amount of influence that $v$'s brings to bear upon the final opinion of every node in $V$. The value $c^v$ is precisely the $v$'th entry of the unique left eigenvector $\cvec$ of $\Gamma$ corresponding to the eigenvalue $1$, and the sum of the entries in $\cvec$ equal $1$.

\section{On the Existence of Pure Nash Equilibrium}
\label{sec:nash}
This section presents our results on the existence of pure Nash equilibrium strategies. Recall that a pure strategy profile $s = (s_i, s_{-i})$ is a Nash equilibrium if, for every player $i \in M$ and every possible pure strategy $s_i'$ for that player, we have $\pi_i(s_i, s_{-i}) \geq \pi_i(s_i', s_{-i})$. 
Throughout this paper, all of the results regarding Nash equilibrium will be with respect to pure strategies, and all unqualified mentions of \emph{strategy} should be understood to refer to a \emph{pure strategy}; all mentions of \emph{Nash equilibrium} refer to \emph{pure Nash equilibrium}.

\subsection{At Consensus}
\label{sec:nash:con}
In this section we establish the existence of pure strategy Nash equilibria for NIGs in which $T$ is sufficiently large to ensure that the opinions reach a consensus. In the consensus regime, all nodes share the same final opinion vector, $\x(T) = (x_1(T), \dots, x_m(T))$, and the utility for each player $i \in M$ is simply
	\begin{equation}
	\label{eq:utility:steadystate}
	\pi_i(s) =  \frac{x_i(T)}{ |\x(T)|_1}.
	\end{equation}

With the profile of weights $\cvec = (c^{v_1}, c^{v_2}, \dots, c^{v_n})$, where $c^v$ denotes the weight of node $v$'s contribution to this consensus, we can express the consensus opinion toward player $i$ as $x_i(T) = \sum_{v \in V} c^v x^v_i(0)$. 
Notice that the consensus opinion $x_i(T)$ for player $i$ is a countably additive function of player $i$'s strategy, $s_i$. This implies that in order for a deviation from $s_i$ to $s_i'$ to increase the consensus opinion toward player $i$, then any deviation from $s_i$ to $s_i''$ where $s_i'' = (s_i \setminus \{v\}) \cup \{u\}$, $v \in s_i \setminus s_i'$, and $u \in s_i' \setminus s_i$ will also increase the consensus opinion toward player $i$. In other words, if $i$ can increase their share of the consensus opinion by swapping $k$ nodes in their strategy, then they can also gain from swapping only a single node. 

A key feature to the consensus case is that, since all nodes will converge to the same opinion, there is no longer any need for players to make strategic trade-offs involving increasing their opinion share among one particular subset of nodes at the cost of decreasing their opinion share elswhere. This is reflected by the fact that the utility function in Equation~\eqref{eq:utility} reduces to Equation~\eqref{eq:utility:steadystate} in the consensus regime (\ie when $T$ is sufficiently large).

\begin{proposition}
\label{prop:nash:steadystate}
Every NIG in which $T$ is sufficiently large so as to ensure that a consensus opinion $\x^v(T)$ is reached among every node $v \in V$ has at least one pure Nash equilibrium.
\end{proposition}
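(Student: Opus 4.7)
The plan is to exploit the additive structure of the consensus-regime utility and then invoke a Rosenthal-style potential argument. From Eq.~\eqref{eq:utility:steadystate} we have $\pi_i(s) = x_i(T)/|\x(T)|_1$ with $x_i(T) = \sum_v c^v x^v_i(0)$ and $|\x(T)|_1 = \sum_v c^v |\x^v(0)|_1$. Substituting the initialization~\eqref{eq:init}, for any fixed $s_{-i}$ both quantities become additive set functions of $s_i$: there exist nonnegative coefficients $a_v, b_v$ and constants $A_0, B_0$ (each a function of $s_{-i}$ only) such that $x_i(T) = A_0 + \sum_{v\in s_i} a_v$ and $|\x(T)|_1 = B_0 + \sum_{v\in s_i} b_v$.

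From these decompositions I would upgrade the single-node-swap observation of the paragraph preceding the proposition from $x_i(T)$ to the ratio $\pi_i$ itself. A one-line algebraic manipulation shows that the sign of $\Delta\pi_i$ under any $k$-swap $s_i \to (s_i \setminus S)\cup U$ equals the sign of $\sum_{u\in U} h(u) - \sum_{v\in S} h(v)$, where $h(v) := D\,a_v - N\,b_v$ and $N, D$ are the numerator and denominator of $\pi_i$ at the current profile. By pigeonhole, whenever a multi-swap strictly improves $\pi_i$ so does the single-swap that exchanges an $S$-node of minimum $h$-value with a $U$-node of maximum $h$-value. Hence $s$ is a Nash equilibrium iff no single-node swap benefits any player; a variant of the same computation also shows that adding a previously unseeded node is weakly beneficial, so every Nash saturates each player's budget.

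Finally I would introduce the Rosenthal-style potential $\Phi(s) := \sum_v c^v H_{m_v(s)} - \varepsilon \sum_{v\in V_s} c^v$ (with $H_k = 1 + \tfrac{1}{2} + \cdots + \tfrac{1}{k}$ and $H_0 := 0$) and verify the exact identity $\Delta\Phi = \Delta x_i(T)$ for every single-player single-node deviation, so $\Phi$ is an exact potential for the ``numerator game'' in which each player maximizes $x_i(T)$. The main obstacle, which I expect to be the delicate step, is that $\Phi$ records only changes in the numerator, whereas improving $\pi_i$ requires the combined condition $D\,\Delta x_i(T) > x_i(T)\,\Delta |\x(T)|_1$; a $\Phi$-maximizer can still admit a $\pi_i$-improving swap that changes $V_s$ and hence $|\x(T)|_1$. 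I would close the gap by either a careful case analysis of $V_s$-changing swaps using the function $h$ above (invoking the $\varepsilon$-tiebreaker of Eq.~\eqref{eq:init} in degenerate configurations), or a Dinkelbach-style argument that replaces $\Phi$ with $\Phi_\lambda(s) := \Phi(s) - \lambda\,|\x(T)|_1$ and pins $\lambda$ down self-consistently as a fixed point over the finite strategy space. Either route yields a pure Nash equilibrium from the finiteness of the strategy set.
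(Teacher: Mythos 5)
Your reduction to single-node swaps is sound: for fixed $s_{-i}$ both $x_i(T)$ and $|\x(T)|_1$ are additive in $s_i$, the sign of $\Delta\pi_i$ is the sign of $D\,\Delta N - N\,\Delta D$, and the averaging/pigeonhole step correctly shows that a profitable multi-swap implies a profitable single swap. Your potential $\Phi(s) = \sum_v c^v H_{m_v(s)} - \varepsilon\sum_{v\in V_s} c^v$ is also genuinely an exact potential for the auxiliary game in which each player maximizes the \emph{numerator} $x_i(T)$: adding a node shared with $k$ others changes both $\Phi$ and $x_i(T)$ by $c^u/(k+1)$, and seeding a fresh node changes both by $c^u(1-\varepsilon)$. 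But the proof has a genuine gap exactly where you flag it, and the gap is not cosmetic. A maximizer of $\Phi$ only guarantees $\Delta x_i(T) \le 0$ for every unilateral deviation, whereas a profitable deviation in the actual game requires $D\,\Delta x_i(T) > x_i(T)\,\Delta|\x(T)|_1$. These genuinely come apart: if player $i$ swaps an exclusively held node $v$ for a node $u$ shared with $k$ others, then $\Delta x_i(T) = \tfrac{c^u}{k+1} - c^v(1-\varepsilon)$ while $\Delta|\x(T)|_1 = -c^v(1-m\varepsilon) < 0$, and the improvement condition becomes $\tfrac{c^u}{k+1} > c^v(1-\varepsilon) - \tfrac{x_i(T)}{D}\,c^v(1-m\varepsilon)$. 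There is a window of width $\tfrac{x_i(T)}{D}c^v(1-m\varepsilon) > 0$ in which the numerator strictly decreases yet $\pi_i$ strictly increases, so a $\Phi$-maximizer need not be an equilibrium of the ratio game. Neither of your two proposed repairs is carried out: the ``careful case analysis'' is precisely the missing proof, and the Dinkelbach substitution $\Phi_\lambda = \Phi - \lambda|\x(T)|_1$ does not obviously work here because the players have distinct numerators over a common denominator, so no single $\lambda$ simultaneously linearizes every player's ratio objective; moreover the game is constant-sum, which makes the existence of any exact potential for the true utilities suspect. As written, the argument establishes equilibrium existence only for the numerator game, not for the NIG.

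For comparison, the paper's proof avoids potentials entirely: it orders nodes by the consensus weights $c^{v_1}\ge\cdots\ge c^{v_n}$, assigns player $1$ the top $b_1$ nodes, builds $s_2,\dots,s_m$ sequentially as best responses, and then argues via a single-swap exchange argument (using the optimality of the later players' responses) that no earlier player can profitably deviate. That construction confronts the denominator issue implicitly through the relative-opinion comparison in the swap argument, which is the very step your potential sidesteps. If you want to salvage your route, the concrete task is to show that at a $\Phi$-maximizer no deviation that strictly shrinks $V_s$ can satisfy the ratio-improvement inequality above; absent that lemma, the proof is incomplete.
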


\begin{proof}[Sketch]
Assume that the $m$ players are ordered so that $b_1 \geq b_2 \geq \cdots \geq b_m$ and the nodes $V = \{ v_1, v_2, \dots, v_n \}$ are ordered so that $c^{v_1} \geq c^{v_2} \geq \cdots \geq c^{v_n}$. Set player $1$'s seed strategy $s_1$ such that it maximizes $x_1(T)$. The strategies for players $i = 2, 3, \dots, m$ will be built sequentially, so that $s_i$ is a best response to the profile $s^{i-1} = (s_1, \dots, s_{i-1})$. We then argue that, given $s^i = (s_1, \dots, s_{i-1}, s_i)$, for every player $j < i$, $s_j$ is a best response to $s^i_{-j} = (s_1, \dots, s_{j-1}, s_{j+1}, \dots, s_i)$. Here, we will give the proof for $m=2$ players; the proof for $m \geq 2$ players follows from an inductive argument that is based on similar reasoning.

Set $s_1 = \{v_1, \dots, v_{b_1} \}$, and let $s_2$ be a best response to $s_1$ that maximizes the value of the consensus opinion toward player $2$, $x_2(T)$. We have two cases to consider: i) $s_1 \cap s_2 = \emptyset$, and ii) $s_1 \cap s_2 \neq \emptyset$.

The first case is trivial: if $s_2$ does not contain any of the nodes in $s_1$, then player $1$ enjoys exclusive access to $b_1$ of the most influential seed nodes.

For the second case, let $r = s_1 \cap s_2$ be the set of seed nodes chosen by player $2$ that are also in $s_1$. Suppose, toward a contradiction, that player $1$ can strictly benefit from changing to a strategy $s_1' = (s_1 \setminus \{u\}) \cup \{v\}$ for some nodes $u \in r$ and $v \in V \setminus (s_1 \cup s_2 )$; \ie player $1$ swaps out a shared node $u$ for exclusive access to another node $v$. By swapping out $u$ for $v$, player $1$ may increase the consensus opinion toward them self, but they would also be increasing the consensus opinion toward player $2$ by relinquishing their share of the influence weight $c^u$. But, by virtue of player $2$'s inclusion of $u$ instead of $v$ in their own best response, it must be the case that the relative opinion toward player $1$ -- and, thus, player $1$'s utility -- would not improve by swapping $u$ for $v$. Therefore, $\pi_1(s_1', s_2) \leq \pi_1(s_1, s_2)$, contradicting the claim that $s_1'$ can earn player $1$ a strictly higher utility than $s_1$. \qed
\end{proof}

\subsection{The General Case}
\label{sec:nash:noexist}
In this section, we show that the existence of a pure Nash equilibrium is \emph{not} guaranteed in NIGs when the dynamic does not reach a consensus opinion. We prove the non-existence for the symmetric setting, in which every player shares the same seed budget; a proof for the asymmetric setting can be found in the full version of this paper.

\begin{proposition}
\label{prop:ne:cycle}
For any $m \geq 2$ and symmetric seed budget $b$, there exist NIGs that do not admit a pure strategy Nash equilibrium.
\end{proposition}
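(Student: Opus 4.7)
My plan is to construct, for any $m \geq 2$ and symmetric budget $b \geq 1$, a specific NIG whose best-response dynamics cycle. I would take $G$ to be the directed cycle on $n = mb + 1$ nodes $v_1, \ldots, v_n$, with the unique incoming edge at $v_\ell$ being $(v_{\ell-1 \bmod n}, v_\ell)$ of weight one, and set $T = 1$ with mixing parameter $\alpha \in (0, 1/2)$. Every strategy profile $s$ uses at most $mb < n$ seeds, so at least one node is always unseeded.

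To set up the payoff computation, note that with $T = 1$ Equation~\eqref{eq:dynamics} reduces to $\x^{v_\ell}(1) = (1-\alpha)\x^{v_\ell}(0) + \alpha \x^{v_{\ell-1}}(0)$, so the relative opinion at $v_\ell$ depends only on the seeding status of $v_\ell$ and its predecessor $v_{\ell-1}$. Taking $\varepsilon \to 0$, these relative opinions fall into a handful of canonical regimes: at a node $v_\ell$ seeded by $i$ alone with $v_{\ell-1}$ empty, $i$'s relative opinion is approximately $1$; in the ``both-seeded'' regime where $v_\ell$ is seeded by $i$ and $v_{\ell-1}$ by some $j \neq i$, the relative opinions are $1-\alpha$ for $i$ and $\alpha$ for $j$; and at a collision shared by $m_v \geq 2$ players, each sharer receives only $1/m_v$. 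Summing over the $n$ nodes expresses every player's utility as a tally of its seeds' regimes. Furthermore, if $|s_i| < b$ for some $i$, then adding any currently unseeded node to $s_i$ strictly raises $i$'s utility, so no sub-budget profile is Nash, and it suffices to consider profiles in which every player uses its full budget.

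The heart of the argument is to exhibit a profitable deviation in each of the two remaining cases: (A) the $mb$ seeds land at distinct nodes with exactly one node empty, or (B) at least two nodes are empty, forcing at least one collision by pigeonhole. In case (A), let $v_k$ be the empty node and $j$ the unique player with a seed at $v_{k+1}$. Before any deviation, $j$'s seed at $v_{k+1}$ is the ``downstream of empty'' slot and contributes $1 + \alpha$ to $j$'s summed relative opinion ($1$ at $v_{k+1}$ plus $\alpha$ at $v_{k+2}$). Consider the swap $s_j' = (s_j \setminus \{v_{k+1}\}) \cup \{v_k\}$. Afterwards, $v_{k+1}$ becomes empty and $v_k$ becomes a singly-seeded ``upstream of empty'' position for $j$, now contributing $(1-\alpha) + 1 = 2 - \alpha$ to $j$'s total. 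The net change in $j$'s sum of relative opinions is $+(1 - 2\alpha) > 0$, which is strictly positive because $\alpha < 1/2$. In case (B), any player $i$ sharing a node $v_c$ with $m_c \geq 2$ other seeders can swap its seed from $v_c$ to an empty node $v_k$, replacing a diluted contribution of at most $1/m_c \leq 1/2$ at $v_c$ with an exclusive contribution of approximately $1 - \alpha > 1/2$ at $v_k$; again a strict gain.

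The main obstacle is managing the cascade effects of a swap: toggling two positions from seeded to empty (or vice versa) simultaneously changes the relative opinions at up to four neighboring nodes, each of which must be accounted for in the utility tally. The choice $\alpha < 1/2$ is precisely what ensures that the $O(\alpha)$ side losses at the ``downstream'' neighbors of the toggled positions are strictly dominated by the $O(1)$ gain from capturing an exclusive predecessor-empty slot. Combined with the counting observation that each empty node admits at most one ``upstream of empty'' position while $mb \geq 2$ seeds must be placed, this rules out any profile from being simultaneously best-response for all players, and therefore no pure Nash equilibrium exists in this NIG.
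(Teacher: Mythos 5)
There is a genuine gap, and it is not merely a bookkeeping issue: your construction does not work for all $(m,b)$. The problem surfaces in your Case (A) computation. The claimed net change of $+(1-2\alpha)$ for the swap $s_j' = (s_j \setminus \{v_{k+1}\}) \cup \{v_k\}$ tacitly assumes that player $j$ does \emph{not} also hold $v_{k-1}$. If $j$ holds both $v_{k-1}$ and $v_{k+1}$, then before the swap the empty node $v_k$ already gives $j$ a relative opinion of (approximately) $1$ via its predecessor $v_{k-1}$, so seizing $v_k$ gains nothing there; $v_{k+1}$ contributes $1$ both before (seeded with empty predecessor) and after (empty with $j$-seeded predecessor); and $j$ loses the $\alpha$ it was collecting at $v_{k+2}$. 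The swap nets $-\alpha$, not $+(1-2\alpha)$. More damagingly, a short calculation shows that in any collision-free full-budget profile with unique empty node $v_k$, player $i$'s total unnormalized opinion is exactly $b + \alpha[\sigma(k+1)=i] + (1-\alpha)[\sigma(k-1)=i]$, so the \emph{only} way any player can gain is by rearranging who sits adjacent to the empty slot; a player whose set of ``free'' positions $\{v_k\}\cup s_i$ contains no two cyclically adjacent nodes is already at its maximum. This can be realized: take $m=3$, $b=2$, $n=7$, with $v_0$ empty, $s_1=\{v_6,v_1\}$, $s_2=\{v_2,v_4\}$, $s_3=\{v_3,v_5\}$. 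Player $1$ captures the entire bonus ($\pi$-numerator $3=b+1$, the maximum attainable), while players $2$ and $3$ each have pairwise non-adjacent free positions and one can check (including collision and sub-budget deviations) that neither can exceed $2$ in the $\varepsilon \to 0$ limit. So your cycle admits a pure Nash equilibrium here, and no choice of deviation can rescue the argument.

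This is exactly the failure mode the paper's construction is engineered to avoid. The paper uses $\mu = m(b+1)+1$ central nodes with each node linked to its next $b$ successors (plus petal gadgets), so that every full-budget, collision-free profile leaves $m+1$ central nodes unseeded. By pigeonhole over those $m+1$ unseeded nodes, some unseeded node always has an unseeded predecessor, which guarantees a strictly profitable single-node swap for some player no matter how the seeds are arranged; with only one empty node, as in your cycle, that guarantee evaporates and a single player can ``lock up'' both neighbors of the hole. If you want to salvage a cycle-like construction, you would need to enlarge $n$ relative to $mb$ (and adjust $T$ and the out-degree) so that an improving adjacency among unseeded nodes is forced in every profile.
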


\begin{figure}[t]
	\centering
	\includegraphics[scale=0.7]{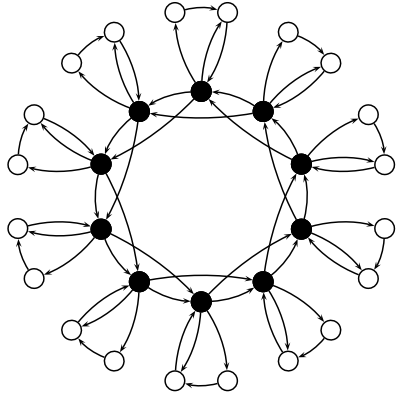}
	\caption{Example of the graph construction in the proof of Proposition~\ref{prop:ne:cycle} with $m = 3$ players and budgets $b = 2$. The dark nodes are the ``central nodes'' and the white ones are the ``petal nodes''. This network does not admit a pure Nash equilibrium.}
	\label{fig:no-nash-graph}
\end{figure}

\begin{proof} 
A construction that does not admit a pure Nash equilibrium is as follows: Create $\mu = m(b+1) + 1$ nodes $v_0, v_1, \dots, v_{\mu-1}$, and add directed edges from each $v_i$ to $v_{i+k}$ for $k = 1, \dots, b$. Next, for each $v_i$, add two additional ``petal nodes'' $v_{i,l}$ and $v_{i,r}$, and add the following four links: $(v_i, v_{i,l})$, $(v_i, v_{i,r})$, $(v_{i,l}, v_{i,r})$, and $(v_{i,r}, v_i)$. Define the weight of each edge $(u,v)$ to be $w(u,v) = \frac{1}{|N^+(v)|}$. Note that the sum of each node's incoming edge weights equals one, and that the resulting graph is strongly connected. (See Figure~\ref{fig:no-nash-graph} for an example.)

Let $G$ be a graph that implements the above construction, and set the parameters $\alpha = 1/2$ and $T = b$. Because a player's optimal strategy will never involve seeding a petal node, we restrict our attention to strategies comprising only central nodes. Let $V_C$ denote these $\mu$ central nodes. Since $|V_C| = \mu = m(b + 1) + 1$ and the total seed budget among all $m$ players is $bm$, it will always be the case that a player's best response will not include nodes that are already included in another player's strategy, and exactly $m + 1$ of the central nodes will not be included as seed nodes in any players' strategies. Let $U \subset V_C$ denote these $m+1$ unseeded central nodes. By design, the graph is constructed so that players prefer selecting seed nodes with successors in $U$. Of the $|U| = m+1$ nodes, at most $m$ of them will have no predecessors that are also in $U$; and there will always be at least one node $u^* \in U$ that does have a predecessor in $U$. Let $u'$ denote $u^*$'s predecessor in $U$. A player $i$ with a seed node $v \in s_i$ that is a successor of $u'$ can always improve their utility by changing to a strategy $s_i' = \{u'\} \cup (s_i \setminus \{v\})$. Since the existence of (at least one) such $u^*$ is guaranteed, then there will always be a player $i$ that can change strategies for an increase in utility. Hence, a Nash equilibrium cannot be obtained. \qed
\end{proof} 

\section{Computational Properties of Best Response}
\label{sec:br}

This section considers the computational problem of finding, for a given player $i$ and strategy profile $s_{-i}$, a strategy $s_i$ that maximizes $i$'s utility, $\pi_i(s_i, s_{-i})$. Such a strategy $s_i$ is called a \emph{best response} to the profile $s_{-i}$ of strategies belonging to every other player $j \neq i$. 

Our first result in this section establishes the computational complexity of the best response problem.

\begin{proposition}
\label{prop:complexity}
Finding a best response strategy for the NIG is $\np$-complete.
\end{proposition}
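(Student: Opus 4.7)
The plan is to prove $\np$-completeness in the standard way: demonstrate membership in $\np$ and then reduce from a known $\np$-hard problem to the decision version of the best response problem.

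For membership, the decision version asks, given $s_{-i}$ and a threshold $\tau$, whether there exists $s_i$ with $|s_i| \leq b_i$ and $\pi_i(s_i, s_{-i}) \geq \tau$. Given a candidate $s_i$, I would verify this by initializing the opinion vectors via Equation~\eqref{eq:init}, iteratively applying the matrix recursion $\x_j(t) = \Gamma \x_j(t-1)$ for $j \in M$ and $t = 1, \dots, T$, and comparing the resulting value of $\pi_i(s)$ computed via Equation~\eqref{eq:utility} against $\tau$ (using cross-multiplication to avoid explicit division). Assuming $T$ is polynomial in the input size (or presented in unary), this is a polynomial-time procedure on rationals of polynomially bounded bit length, so the problem lies in $\np$.

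For $\np$-hardness, I would reduce from Maximum Coverage: given a universe $U$, a family $\mathcal{C} = \{C_1, \dots, C_L\}$ of subsets of $U$, and integers $k$ and $\tau'$, decide whether some $k$ sets from $\mathcal{C}$ cover at least $\tau'$ elements of $U$. Given such an instance I would build an NIG on a bipartite-style graph with a ``set node'' $v_{C_\ell}$ for each $C_\ell$ and an ``element node'' $v_u$ for each $u$, with a directed edge $(v_{C_\ell}, v_u)$ whenever $u \in C_\ell$. To meet the strong-connectivity and in-weight-sum conditions from Section~\ref{sec:model}, I would augment the graph with a lightweight auxiliary gadget (for example, a directed cycle threading all nodes with small but strictly positive edge weights) and then normalize each node's incoming edges to sum to one. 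Take $m = 2$, $b_1 = k$, fix Player~2's (small) strategy on a designated auxiliary node far from the bipartite subgraph, and choose $\alpha$ and $T$ small (e.g.\ $\alpha = 1/2$ and $T = 1$) so that after $T$ steps an element node's opinion is dominated by the direct influence of its incident set nodes. The correspondence to argue is: for an appropriate threshold $\tau$, $\pi_1(s_1, s_2) \geq \tau$ if and only if the sets indexed by $s_1 \cap \{v_{C_1}, \dots, v_{C_L}\}$ cover at least $\tau'$ elements of $U$. The intuition is that at a covered element node the ratio $x_1^{v_u}(T) / |\x^{v_u}(T)|_1$ is close to $1$, while at an uncovered element node it is close to $1/2$, since only the $\varepsilon$ background opinions separate the two players there; summed over $V$, $\pi_1$ is an affine-increasing function of the number of covered elements up to lower-order corrections in $\varepsilon$.

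The main obstacle is engineering the auxiliary gadget cleanly: it must simultaneously restore strong connectivity, restore the correct in-weight normalization, contribute only negligibly to the opinion flows within $T$ steps, and disincentivize Player~1 from spending seeds outside the set nodes, so that the best response must be drawn entirely from $\{v_{C_1}, \dots, v_{C_L}\}$ and the Maximum Coverage correspondence is preserved. A secondary technical concern is managing the $\varepsilon$ background and the ratio form of the utility carefully enough that the numerical gap between covered and uncovered elements survives in $\pi_1$; this reduces to choosing $\varepsilon$ sufficiently small relative to $1/n$ and the edge weights in the gadget. The overall architecture of the reduction mirrors the Set Cover reduction used by Kempe, Kleinberg, and Tardos~\cite{Kempe2003b} for the single-player influence maximization problem, adapted to the consensus dynamic and the ratio-based utility of the NIG.
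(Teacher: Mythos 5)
Your proposal takes essentially the same route as the paper, whose proof is itself only a two-line sketch: membership via polynomial-time evaluation of the utility function, and hardness via a Kempe--Kleinberg--Tardos-style reduction from \textsc{Set Cover} (your \textsc{Maximum Coverage} formulation is the standard budgeted decision variant of the same problem, and the bipartite set-node/element-node construction is the expected instantiation). Your write-up actually supplies more detail than the paper does, and the obstacles you flag --- the strong-connectivity gadget, the in-weight normalization, and controlling the $\varepsilon$ background in the ratio-based utility --- are exactly the pieces the paper leaves unaddressed.
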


\begin{proof}[Sketch]
Hardness follows by reduction from \textsc{Set Cover} and completeness is due to the fact that the utility function can be computed in polynomial time.  \qed
\end{proof}

Next, we turn the problem of finding an approximate best response. It will be useful to adopt the following definition of the utility function, which is equivalent to Equation~\eqref{eq:utility}:
	\begin{equation}
	\label{eq:approx:1}
	\pi_i(s) = \frac{1}{n} \sum_{v \in V} g^i_v(s)
	\end{equation}
where
	\begin{equation}
	\label{eq:approx:2}
	g^i_v(s) = \frac{f^i_v(s)}{f_v(s)},
	\end{equation}
	\begin{equation}
	\label{eq:approx:3}
	f^i_v(s) = \sum_{u \in s_i} \frac{c^u_v}{m_u(s)}  + \sum_{u \in V \setminus s} \varepsilon c^u_v,
	\end{equation}
and
	\begin{equation*}
	f_v(s) = \sum_{j \in M} f^j_v(s).
	\end{equation*}
Here, we employ the notation $m_u(s)$ to denote the cardinality of the set $M_u(s) = \{ j \mid u \in s_j \} \subseteq M$. The quantities $c^u_v$, which measure the amount of influence that node $u$ exerts on node $v$ after $T$ time steps, was defined in Section~\ref{sec:model:influence}.

Our main result for this section establishes that the utility function $\pi_i(\cdot)$ is submodular,\footnote{A set function $f : \Omega \rightarrow \R$ is \emph{submodular} if, for every $X \subseteq Y \subset \Omega$ and element $x \in \Omega \setminus Y$, we have $f(X \cup \{x\}) - f(X) \geq f(Y \cup \{x\}) - f(Y)$.}
from which it follows from the classic result by Nemhauser, Wolsey, and Fisher \cite{Nemhauser1978} that a $(1 - 1/e) \approx 0.6321$ approximation can be computed using a greedy algorithm.

\begin{proposition}
\label{prop:submodular}
The utility function $\pi_i(\cdot)$ is monotonic and submodular.
\end{proposition}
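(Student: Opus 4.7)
The plan is to reduce monotonicity/submodularity of $\pi_i(s) = \tfrac{1}{n}\sum_v g^i_v(s)$ to the same property for each pointwise relative opinion $g^i_v(s) = f^i_v(s)/f_v(s)$, regarded as a set function of $s_i$ with $s_{-i}$ held fixed. This reduction is sound because non-negative combinations preserve both properties. Throughout I would fix $v$, abbreviate $a_u := c^u_v$, set $k_u := |\{j \neq i : u \in s_j\}|$, and partition the ground set into $W := \{u : k_u \geq 1\}$ (nodes already seeded by some opponent) and $U := V \setminus W$.

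The first step is to compute the single-node marginal gain $\Delta_u g^i_v(s_i) := g^i_v(s_i \cup \{u\}) - g^i_v(s_i)$ in closed form. Inspection of Equation~\eqref{eq:approx:3} shows that adding $u \notin s_i$ to $s_i$ has only two possible effects: if $u \in W$, then $f_v$ is unchanged and $f^i_v$ grows by $a_u/(k_u+1)$; if $u \in U$, then $f_v$ grows by $(1-m\varepsilon)a_u$ and $f^i_v$ grows by $(1-\varepsilon)a_u$. A short simplification of $f^i_v/f_v$ then yields
\[
\Delta_u g^i_v(s_i) = \begin{cases}
\dfrac{a_u/(k_u+1)}{f_v(s_i)}, & u \in W,\\[2mm]
\dfrac{a_u\bigl[(1-\varepsilon)(f_v - f^i_v) + (m-1)\varepsilon f^i_v\bigr]}{f_v\bigl[f_v+(1-m\varepsilon)a_u\bigr]}, & u \in U,
\end{cases}
\]
with $f_v$ and $f^i_v$ evaluated at $s_i$. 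Both expressions are non-negative because $0 < \varepsilon < 1/m$, establishing monotonicity.

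For submodularity I would check the equivalent pairwise condition $\Delta_u g^i_v(s_i) \geq \Delta_u g^i_v(s_i \cup \{w\})$ for arbitrary $s_i$ and distinct $u, w \notin s_i$, arguing by a four-way case split on the memberships of $u$ and $w$ in $U$ vs.\ $W$. Three cases are routine: $(W,W)$ is actually modular because $f_v$ is preserved; $(W,U)$ shrinks the $W$-marginal since its denominator $f_v$ grows by $(1-m\varepsilon)a_w$; and $(U,W)$ leaves $f_v$ unchanged but subtracts a positive multiple of $a_w/(k_w+1)$ from the numerator of the $U$-marginal. I expect the $(U,U)$ case to be the main obstacle, since then both $f^i_v$ and $f_v - f^i_v$ are perturbed by the addition of $w$. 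The key will be observing that the $\varepsilon$-cross terms in the numerator $(1-\varepsilon)(f_v - f^i_v) + (m-1)\varepsilon f^i_v$ cancel exactly under the joint perturbation, so the numerator is invariant while both factors of the denominator strictly grow by $(1-m\varepsilon)a_w$. Once this cancellation is in hand, the marginal weakly decreases in all four cases, so each $g^i_v$ is submodular; summing over $v \in V$ then yields the claim for $\pi_i$.
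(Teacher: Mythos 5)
Your proposal is correct and follows the same route as the paper's (very terse) proof sketch: decompose $\pi_i$ as a non-negative combination of the pointwise ratios $g^i_v = f^i_v/f_v$ and establish monotonicity and submodularity of each $g^i_v$ via the marginal-gain case analysis on seeded vs.\ unseeded nodes. Your closed-form marginals and the exact cancellation of the $\varepsilon$-cross terms in the $(U,U)$ case all check out, so you have in effect supplied the details the paper omits.
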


\begin{proof}[Sketch]
The submodularity and monotonicity follow immediately from establishing that \eqref{eq:approx:3} is increasing and \eqref{eq:approx:2} is submodular and the fact that since $\pi_i(\cdot)$ is a function that is defined by a linear combination of submodular functions (\cf Equation~\eqref{eq:approx:1}), then $\pi_i(\cdot)$ is itself submodular. \qed
\end{proof} 

\section{Discussion and Future Work}
\label{sec:con}
This paper presented a model for the strategic seeding of opinion dynamics using the simple, well-studied DeGroot consensus dynamic. We established that the existence of pure Nash equilibria cannot be guaranteed if the dynamic is not allowed to run to consensus. The amount of time required for the dynamic to reach to consensus is known to be slower in networks with modular (homophilic) connectivity patterns \cite{Golub2010b}. This implies that in practice, strategic opinion seeding on real-world social networks, which often exhibit modular structures, should not assume that there will be enough time for the population to coalesce around a shared, consensus opinion; and, crucially, the individuals that appear to be attractive seeds in the steady state regime when a consensus is reached (those with high eigenvector centrality) may not be the best choice if the dynamics halt in the transient regime.

Our findings in Section~\ref{sec:br} on the computational problem of finding best response strategies are in alignment with similar competitive influence models that employ opinion dynamics that are more complex and less amenable to analytical tractability than the simplistic DeGroot consensus dynamic we use. For example, in a competitive extension of the probabilistic \emph{independent cascade} model of Kempe \etal \cite{Kempe2003b},  Bharathi \etal \cite{Bharathi2007} establish a $(1 - 1/e)$ approximation guarantee using monotone and submodularity arguments that extend those used in \cite{Kempe2003b} for the ``single-player'' setting (see also Mossel and Roch \cite{Mossel2010b}). However, some models have approximation guarantees that can be significantly worse than $(1 - 1/e)$. For example, Borodin \etal \cite{Borodin2010} show that competitive extensions of the \emph{threshold} diffusion model do not share the $(1 - 1/e)$ approximation guarantee established in \cite{Kempe2003b} for the ``single player'' optimization setting. They do, however, offer a variant of the threshold model that does admit a $(1 - 1/e)$ approximation guarantee. Similar $(1 - 1/e)$ approximation guarantees are established more recently in competitive influence maximization models by Goyal \etal \cite{Goyal2014} and Fotakis \etal \cite{Fotakis2014}.

Our analysis in Section~\ref{sec:nash} highlights the importance of the length of the diffusion process, $T$, in guaranteeing the existence pure strategy Nash equilibria. The identification of conditions that are sufficient to guarantee the existence of equilibria for small, non-consensus reaching values of $T$ is an interesting open problem. Related to this is an intriguing extension to the model that would allow players to not only choose \emph{which} nodes to seed, but also \emph{when} to seed them. In our preliminary investigations into this extension, we have observed in simulations that some graphs contain nodes whose influence ``peaks'' at a greater magnitude in the dynamic's transient regime than in the steady state. We believe that such an extension would also more closely model many real-world applications, such as political contests and advertising campaigns, where timing can be an important consideration.

\subsection*{Acknowledgements}
\label{sec:con:ack}
The authors gratefully acknowledge support from the US Army Research Office MURI Award No. W911NF-13-1-0340 and Cooperative Agreement No. W911NF-09-2-0053.

\bibliographystyle{plain}
\bibliography{bib}

\begin{thebibliography}{10}

\bibitem{Alon2010b}
Noga Alon, Michal Feldman, Ariel~D. Procaccia, and Moshe Tennenholtz.
\newblock A note on competitive diffusion through social networks.
\newblock {\em Information Processing Letters}, 110(6):221--225, February 2010.

\bibitem{Banerjee2013}
Abhijit Banerjee, Arun~G. Chandrasekhar, Esther Duflo, and Matthew~O. Jackson.
\newblock The diffusion of microfinance.
\newblock {\em Science}, 341(6144):363--371, July 2013.

\bibitem{Banerjee2014}
Abhijit~V. Banerjee, Arun Chandrasekhar, Esther Duflo, and Matthew~O. Jackson.
\newblock Gossip: Identifying central individuals in a social network.
\newblock Working Paper, August 2014.

\bibitem{Bharathi2007}
Shishir Bharathi, David Kempe, and Mahyar Salek.
\newblock Competitive influence maximization in social networks.
\newblock In {\em Internet and Network Economics Third International Workshop,
  WINE 2007, San Diego, CA, USA, December 12-14, 2007. Proceedings}, volume
  4858 of {\em Lecture Notes in Computer Science}, pages 306--311. Springer,
  2007.

\bibitem{Borodin2010}
Allan Borodin, Yuval Filmus, and Joel Oren.
\newblock Threshold models for competitive influence in social networks.
\newblock In {\em Internet and Network Economics 6th International Workshop,
  WINE 2010, Stanford, CA, USA, December 13-17, 2010. Proceedings}, volume 6484
  of {\em Lecture Notes in Computer Science}, pages 539--550. Sprinter, 2010.

\bibitem{DeGroot1974}
Morris~H. DeGroot.
\newblock Reaching a consensus.
\newblock {\em Journal of the American Statistical Association},
  69(345):118--121, March 1974.

\bibitem{DeMarzo2003}
Peter~M. DeMarzo, Dimitri Vayanos, and Jeffery Zweibel.
\newblock Persuasion bias, social influence, and unidimensional opinions.
\newblock {\em The Quarterly Journal of Economics}, 118(3):909--968, 2003.

\bibitem{Fotakis2014}
Dmitris Fotakis, Thodoris Lykouris, Evangelos Markakis, and Svetlana Obraztova.
\newblock Influence maximization in switching-selection threshold models.
\newblock In {\em Algorithmic Game Theory. 7th International Symposium, SAGT
  2014, Haifa, Israel, September 30 -- October 2, 2014. Proceedings}, volume
  8768 of {\em Lecture Notes in Computer Science}, pages 122--133. Springer,
  2014.

\bibitem{Golub2010b}
Benjamin Golub and Matthew~O. Jackson.
\newblock How homophily affects the speed of contagion, best response and
  learning dynamics.
\newblock Working Paper, September 2010.

\bibitem{Golub2010}
Benjamin Golub and Matthew~O. Jackson.
\newblock Na\"{i}ve learning in social networks and the wisdom of crowds.
\newblock {\em American Economic Journal: Microeconomics}, 2(1):112--149,
  February 2010.

\bibitem{Goyal2014}
Sanjeev Goyal, Hoda Heidari, and Michael Kearns.
\newblock Competitive contagion in networks.
\newblock {\em Games and Economic Behavior}, 2014.
\newblock (in press).

\bibitem{Jackson2008}
Matthew~O. Jackson.
\newblock {\em Social and Economic Networks}.
\newblock Princeton University Press, Princeton, NJ, 2008.

\bibitem{Jackson2011c}
Matthew~O. Jackson.
\newblock An overview of social networks and economic applications.
\newblock In Jess Benhabib, Alberto Bisin, and Matthew~O. Jackson, editors,
  {\em Handbook of Social Economics}, chapter~12, pages 511--585. North
  Holland, San Diego, California, USA, 2011.

\bibitem{Kempe2003b}
David Kempe, Jon Kleinberg, and \'{E}va Tardos.
\newblock Maximizing the spread of influence through a social network.
\newblock In {\em Proceedings of the ninth ACM SIGKDD international conference
  on Knowledge discovery and data mining}, KDD '03, pages 137--146, New York,
  NY, USA, 2003. ACM.

\bibitem{Mossel2010b}
Elchanan Mossel and Sebastien Roch.
\newblock Submodularity of influence in social networks: From local to global.
\newblock {\em {SIAM} Journal on Computing}, 39(6):2176--2188, 2010.

\bibitem{Nemhauser1978}
G.L. Nemhauser, L.A. Wolsey, and M.L. Fisher.
\newblock An analysis of approximations for maximizing submodular set
  functions---i.
\newblock {\em Mathematical Programming}, 14(1):265--294, 1978.

\bibitem{Tzoumas2012}
Vasileios Tzoumas, Christos Amanatidis, and Evangelos Markakis.
\newblock A game-theoretic analysis of a competitive diffusion process over
  social networks.
\newblock In {\em Internet and Network Economics 8th International Workshop,
  WINE 2012, Liverpool, UK, December 10-12, 2012. Proceedings}, Lecture Notes
  in Computer Science, pages 1--14, Berlin, Heidelberg, 2012. Springer.

\end{thebibliography}

\end{document}